\newenvironment{listingsupportingpagebreaks}{\captionsetup{type=listing,singlelinecheck=false,aboveskip=0pt,belowskip=0pt,margin=0pt}}{}
\newcommand\snippet[1]{snippets/#1}
\newcommandx*{\insertlisting}[3]{%
\begin{framed}%
\begin{listingsupportingpagebreaks}%
\inputminted[breaklines,linenos,tabsize=2]{jml.py:JMLLexer -x}{\snippet{#2}}\caption{#1}\label{#3}%
\end{listingsupportingpagebreaks}%
\end{framed}%
}
\newcommandx*{\insertlistingWithoutCaption}[1]{%
\begin{framed}%
\begin{listingsupportingpagebreaks}%
\inputminted[breaklines,tabsize=2]{jml.py:JMLLexer -x}{\snippet{#1}}%
\end{listingsupportingpagebreaks}%
\end{framed}%
}
\newcommandx*{\insertlistingkey}[3]{%
\begin{framed}%
\begin{listingsupportingpagebreaks}%
\inputminted[breaklines,linenos,tabsize=2]{jml.py:KeYLexer -x}{\snippet{#2}}\caption{#1}\label{#3}%
\end{listingsupportingpagebreaks}%
\end{framed}%
}
\newcommandx*{\insertlistingkeyWithoutCaption}[1]{%
\begin{framed}%
\begin{listingsupportingpagebreaks}%
\inputminted[breaklines,tabsize=2]{jml.py:KeYLexer -x}{\snippet{#1}}%
\end{listingsupportingpagebreaks}%
\end{framed}%
}
\newcommandx*{\insertfigure}[4][3=1,4=htpb]{
\begin{figure}[#4]
  \centering
  \includegraphics[width=\columnwidth*\real{#3}]{#2} % Overleaf: defaults to map Figures
  \captionof{figure}{#1}
  \label{#1}
\end{figure}
}
\newcommand\inputjml[1]{\inputminted[breaklines,fontsize=\scriptsize,tabsize=2]{jml.py:JMLLexer -x}{\snippet{#1}}}
\newcommand\inputjmlln[1]{\inputminted[breaklines,fontsize=\scriptsize,linenos=true,tabsize=2]{jml.py:JMLLexer -x}{\snippet{#1}}}
\newcommand\inputkey[1]{\inputminted[breaklines,fontsize=\scriptsize,tabsize=2]{jml.py:KeYLexer -x}{\snippet{#1}}}
\def\bs{\char092}
\pgfplotsset{compat=1.14}
\newtheorem*{prop}{Proposition}
\begin{document}

\title{Verifying OpenJDK's \texttt{LinkedList} using KeY}

\author{{Hans-Dieter} A. Hiep\inst{1} \and Olaf Maathuis\inst{3} \and Jinting Bian\inst{1}\and\\
Frank S. de Boer\inst{1} \and Marko van Eekelen\inst{2} \and Stijn de Gouw\inst{2}}
\authorrunning{H.A. Hiep, O. Maathuis, et al.}

\institute{CWI, Science Park 123, 1098 XG Amsterdam, The Netherlands\\
\email{\{hdh,j.bian,frb\}@cwi.nl}
\and
Open University, P.O. Box 2960, 6401 DL Heerlen, The Netherlands\\
\email{\{marko.vaneekelen,stijn.degouw\}@ou.nl}
\and
Achmea, P.O. Box 700, 7300 HC Apeldoorn, The Netherlands\\
\email{olaf.maathuis@achmea.nl}}

\maketitle

\begin{abstract}
As a particular case study of the formal verification of state-of-the-art, real  software, we discuss the specification and verification of a corrected version of the implementation of a linked list as provided by the Java Collection framework.

\keywords{Java, standard library, deductive verification, KeY, Java Modeling Language, case study, bug}
\end{abstract}

\section{Introduction}\label{sec:intro}
%Intro

%motivation: real-world verification

%KeY project

%conclusion + related work + intro

Software libraries are the building blocks of millions of programs, and they run on the devices of billions of users every day.
Therefore, their correctness is of the utmost importance.
%Among the arguments that are routinely invoked against the usage of formal software verification one can find the following: it is expensive, it is not worthwhile (compared to its cost), it is less effective than bug finding (e.g., by testing, static analysis, or model checking), it does not work for “real” software.
The importance and potential of formal  software verification as a means of rigorously validating state-of-the-art, real  software and improving it,
%The effectiveness of  formal software verification of real software, i.e., software used in (industrial) practice,
is convincingly illustrated by its application to TimSort, the default sorting library in many widely used programming languages, including Java and Python, and platforms like Android (see \cite{GouwRBBH15,GouwBBHRS19}): a crashing implementation bug was found.

The Java implementation of TimSort belongs to the Java Collection framework which provides
 implementations of basic data structures and is among the most widely used
libraries. Nonetheless, over the years, 877 bugs in the Collections Framework have been reported in the official OpenJDK bug tracker.

%In this paper we describe another application of formal software verification.
Due to the intrinsic complexity of modern software, the possibility of interventions by a human verifier is indispensable
for proving correctness.
This holds in particular for  the Java Collection library,
where programs are expected to behave correctly for inputs of arbitrary size.
As a particular case study, we discuss the formal verification of a corrected version of the implementation of a linked list as specified by the class \verb|LinkedList| of the Java Collection framework. Apart from the fact that the data structure of a linked list is one of the basic structures for storing and maintaining unbounded data, this is an interesting case study because
it provides further evidence that formal verification of real software  can lead to major improvements and correctness guarantees.

\begin{wrapfigure}{L}{0.29\textwidth}
  \centering
  \hspace*{-1em}\includegraphics[scale=0.2]{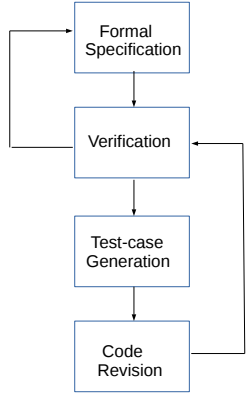}
  \caption{Workflow}
  \label{fig:wf}
\end{wrapfigure}

We follow the general workflow underlying the TimSort case as depicted in Figure \ref{fig:wf}. The workflow starts with a formalisation of the informal documentation of the Java code in the Java Modeling Language \cite{LeavensBR99}. %supported by the KeY theorem prover.
This formalisation goes hand in hand with the formal verification: failed verification attempts can provide information about
further refinements of the specs. A failed verification attempt may also indicate an error in the code,
and can as such be used for the generation of test cases to detect the error at run-time.

\texttt{LinkedList} is the only \texttt{List} implementation in the Collection Framework that allows collections of unbounded size. Following this workflow we found out  that the Java linked list implementation does not correctly take into account
the Java integer overflow semantics.
It is exactly for large lists ($\geq 2^{31}$ items), that the implementation breaks. 
This basic observation gave rise to  a number of test cases which show that Java's \texttt{LinkedList} class breaks 22 methods out of a total of 25 methods of the \texttt{List}!\footnote{We filed a bug report to the official Java bug tracker. Once the report is made public by the Java maintainers, we will add the URL as metadata to our repository~\cite{hiep2019prooffiles}.}

On the basis of these test cases we propose in Section~\ref{sec:linkedlist} also a fixed version  of the Java linked list implementation and formally specify and verify its correctness in Section~\ref{sec:improved}  with respect to the Java integer overflow semantics.
In Section~\ref{sec:discussion} we discuss the main challenges posed by this case study and related work.

This case study has been carried out using the  state-of-the-art KeY theorem prover \cite{KeYbook}, because it formalizes the integer overflow semantics of Java and it allows to directly “load” Java programs.
An archive of proof files and the KeY version used in this work is available \cite{hiep2019prooffiles}.

\section{\texttt{LinkedList} in OpenJDK}\label{sec:linkedlist}

\begin{minipage}{.48\textwidth}
\vspace*{-10pt}
\inputjml{LinkedList_core.java}
\end{minipage}\hspace*{0.04\textwidth}%
\begin{minipage}{.48\textwidth}
\inputjml{add_linkLast.java}
\label{lst:linkLast}
\end{minipage}

\texttt{LinkedList} was introduced in Java version 1.2 as part of Java's Collection Framework in 1998. Figure~\ref{fig:hierarchy} shows how \texttt{LinkedList} fits in the type hierarchy of this framework. \texttt{LinkedList} implements the \texttt{List} interface, and also supports all general \texttt{Collection} methods as well as the methods from the \texttt{Queue} and \texttt{Deque} interface. The \texttt{List} interface provides positional access to the elements of the list, where each element is indexed by Java's primitive \texttt{int} type.

\noindent 
\begin{minipage}{.48\textwidth}
  \centering
  \includegraphics[width=0.95\textwidth]{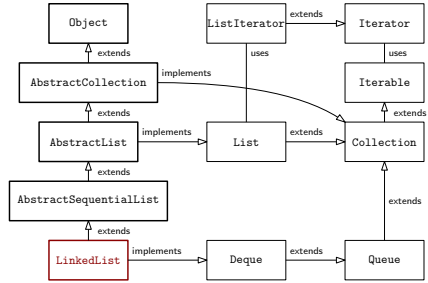}
  \captionof{figure}{Java Collections framework with \texttt{LinkedList} in lower-left corner. Classes have thick edges, interfaces have thin edges. This picture shows the complex inheritance structure.}
  \label{fig:hierarchy}
\end{minipage}\hspace*{0.04\textwidth}%
\begin{minipage}{.48\textwidth}
\vspace*{2pt}The structure of the \texttt{LinkedList} class is shown before.
This class has three attributes: a \texttt{size} field, which stores the number of elements in the list, and two
fields that store a reference to the \texttt{first} and \texttt{last} node. Internally, it uses the private static nested \texttt{Node} class to represent the items in the list. A static nested private class behaves like a top-level class, except that it is not visible outside the enclosing class (\texttt{LinkedList}, in this case). Nodes are doubly linked; each node is connected to the preceding (field \texttt{prev}) and succeeding node (field \texttt{next}). These fields contain \texttt{null} in case no preceding or succeeding node exists. The data itself is contained in the \texttt{item} field of a node.
\end{minipage}

\vspace*{4pt}\texttt{LinkedList} contains 57 methods. Due to space limitations, we focus on three characteristic methods. Method \texttt{add(E)} calls method \texttt{linkLast(E)}, which creates a new \texttt{Node} object to store the new item and adds the new node to the end of the list. Finally the new size is determined by unconditionally incrementing the value of the \texttt{size} field, which has type \texttt{int}. Method \texttt{indexOf(Object)} returns the position (of type \texttt{int}) of the first occurrence of the specified element in the list, or $-1$ if it's not present.

\vspace*{-4pt}\inputjml{indexOf_o_original.java}

%\begin{figure}[t]
  %\centering
  %\includegraphics[scale=0.75]{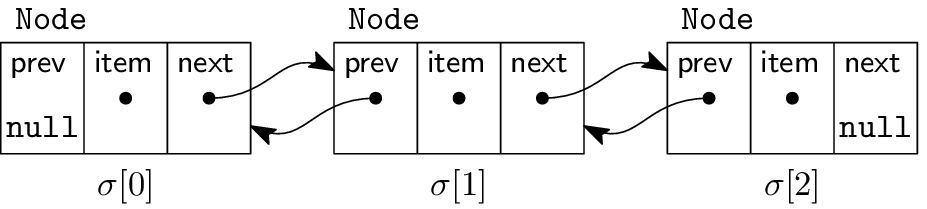}
  %\caption{A chain of three nodes, indexed from left to right. Items are not shown.}
  %\label{fig:chain}
%\end{figure}

\vspace*{-4pt}\noindent Each linked list consists of a sequence of nodes. Sequences are finite, indexing of sequences starts at zero, and we write $\sigma[i]$ to mean the $i$th element of some sequence $\sigma$. A \emph{chain} is a sequence $\sigma$ of nodes of length $n>0$ such that: the \texttt{prev} reference of the first node $\sigma[0]$ is \texttt{null}, the \texttt{next} reference of the last node $\sigma[n-1]$ is \texttt{null}, the \texttt{prev} reference of node $\sigma[i]$ is node $\sigma[i-1]$ for every index $0<i<n$, and the \texttt{next} reference of node $\sigma[i]$ is node $\sigma[i+1]$ for every index $0\leq i < n-1$. The \texttt{first} and \texttt{last} references of a linked list are either both \texttt{null} to represent the \emph{empty} linked list, or there is some chain $\sigma$ between the \texttt{first} and \texttt{last} node, viz.~$\sigma[0]=\mathtt{first}$ and $\sigma[n-1]=\mathtt{last}$. Figure \ref{fig:linkedlist} shows example instances.
Also see standard literature such as Knuth's \cite[Section 2.2.5]{knuth1997art}.

\begin{figure}[t]
  \centering
  \includegraphics[scale=0.75]{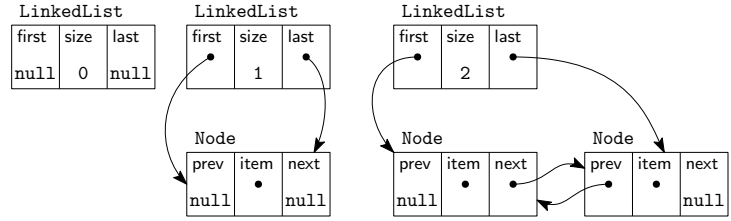}
  \caption{Three example linked lists: empty, with a chain of one node, and with a chain of two nodes. Items themselves are not shown.}
  \label{fig:linkedlist}
\end{figure}

We make a distinction between the \emph{actual} size of a linked list and its \emph{cached} size. In principle, the size of a linked list can be computed by walking through the chain from the \texttt{first} to the \texttt{last} node, following the \texttt{next} reference, and counting the number of nodes. For performance reasons, the Java implementation also maintains a cached size. The cached size is stored in the linked list instance.

Two basic properties of doubly-linked lists are \emph{acyclicity} and \emph{unique first and last nodes}. Acyclicity is the statement that for any indices $0\leq i<j<n$ the nodes $\sigma[i]$ and $\sigma[j]$ are different. First and last nodes are unique: for any index $i$ such that $\sigma[i]$ is a node, the \texttt{next} of $\sigma[i]$ is \texttt{null} if and only if $i=n-1$, and \texttt{prev} of $\sigma[i]$ is \texttt{null} if and only if $i=0$. Each item is stored in a separate node, and the same item may be stored in different nodes when duplicate items are present in the list.

\subsection{Integer overflow bug}\label{sec:bugs}
The size of a linked list is encoded by a signed 32-bit integer (Java's primitive \texttt{int} type) that has a two's complement binary representation where the most significant bit is a sign bit. The values of \texttt{int} are bounded and between $- 2^{31}$ (\texttt{Integer.MIN\_VALUE}) and $2^{31} - 1$ (\texttt{Integer.MAX\_VALUE}), inclusive. Adding one to the maximum value, $2^{31} - 1$, results in the minimum value, $- 2^{31}$: the carry of addition is stored in the sign bit, thereby changing the sign.

Since the linked list implementation maintains one node for each element, its size is implicitly bounded by the number of node instances that can be created. Until 2002, the JVM was limited to a 32-bit address space, imposing a limit of 4 gigabytes (GiB) of memory. In practice this is insufficient to create $2^{31}$ node instances. Since 2002, a 64-bit JVM is available allowing much larger amounts of addressable memory. Depending on the available memory, in principle it is now possible to create $2^{31}$ or more node instances. In practice such lists can be constructed today on systems with 64 gigabytes of memory, e.g., by repeatedly adding elements. However, for such large lists, at least 20 methods break, caused by signed integer overflow. For example, several methods crash with a run-time exception or exhibit unexpected behavior!

Integer overflow bugs are a common attack vector for security vulnerabilities: even if the overflow bug may seem benign, its presence may serve as a small step in a larger attack. Integer overflow bugs can be exploited more easily on large memory machines used for `big data' applications, e.g.~there are real-world attacks that involve Java arrays with approximately $\nicefrac{2^{32}\!}{5}$ elements \cite[Section~3.2]{phrack2018escaping}.

The \texttt{Collection} interface allows for collections with over \texttt{Integer.MAX\_VALUE} elements. For example, its documentation (Javadoc) explicitly states the behavior of the \texttt{size()} method: `Returns the number of elements in this collection. If this collection contains more than \texttt{Integer.MAX\_VALUE} elements, returns \texttt{Integer.MAX\_VALUE}'. The special case (`more than \ldots') for large collections is necessary because \texttt{size()} returns a value of type \texttt{int}.

When \texttt{add(E)} is called and unconditionally increments the \texttt{size} field, an overflow happens after adding $2^{31}$ elements, resulting in a negative \texttt{size} value. In fact, as the Javadoc of the \texttt{List} interface (see Appendix~\ref{sec:java-util-list}) describes, this interface is based on integer indices of elements: `The user can access elements by their integer index (position in the list), \ldots'. For elements beyond \texttt{Integer.MAX\_VALUE}, it is very unclear what integer index should be used. Since there are only $2^{32}$ different integer values,  at most $2^{32}$ node instances can be associated with an unique index. For larger lists, elements cannot be uniquely addressed anymore using an integer index. In essence, as we shall see in more detail below, the bounded nature of the integer indices implies that the design of the \texttt{List} interfaces breaks down for large lists. Remarkably, the actual size of the linked list remains correct as the chain is still in place: most methods of the \texttt{Queue} interface still work. The above observations have many ramifications: it can be shown that 22 out of the 25 methods in the \texttt{List} interface are broken.
\subsection{Reproduction}

%\inputjml{makeListFull.java}

We have run a number of test cases to show the presence of bugs caused by the integer overflow. The running Java version was Oracle's JDK8 (build 1.8.0 201-b09) that has the same \texttt{LinkedList} implementation as in OpenJDK8. Before running a test case, we set up an empty linked list instance. Below, we give an high-level overview of the test cases. Each test case uses \texttt{letSizeOverflow()} or \texttt{addElementsUntilSizeIs0()}: these repeatedly call the method \texttt{add()} to fill the linked list with \texttt{null} elements, and the latter method also adds a last element (\texttt{"this is the last element"}) causing \texttt{size} to be $0$ again.

\begin{enumerate}
\item Directly after \texttt{size} overflows, the \texttt{size()} methods returns a negative value, violating what the corresponding Javadoc stipulates: its value should remain $\texttt{Integer.MAX\_VALUE} = 2^{31} - 1$.

\inputjml{testcase1_output.txt}

Clearly this behavior is in contradiction with the documentation. The actual number of elements is determined by having a field \texttt{count} (of type \texttt{long}) that is incremented each time the method \texttt{add()} is called.

\item The query method \texttt{get(int)} returns the element at the specified position in the list. It throws an \texttt{IndexOutOfBoundsException} exception when \texttt{size} is negative. From the informal specification, it is unclear what indices should be associated with elements beyond \texttt{Integer.MAX\_VALUE}.

\inputjml{testcase2_output.txt}

\item The method \texttt{toArray()} returns an array containing all of the elements in this list in proper sequence (from first to last element). When \texttt{size} is negative, this method throws a \texttt{NegativeArraySizeException} exception. Furthermore, since the array size is bounded by $2^{31}-1$ elements\footnote{In practice, the maximum array length turns out to be $2^{31}-5$, as some bytes are reserved for object headers, but this may vary between Java versions \cite{phrack2018escaping,knuppel2018experience}.}, the contract of \texttt{toArray()} is unsatisfiable for lists larger than this. The method \texttt{Collections.sort(List<T>)} sorts the specified list into ascending order, according to the natural ordering of its elements. This method calls \texttt{toArray()}, and therefore also throws a \texttt{NegativeArraySizeException}.

\inputjml{testcase3_output.txt}

\item Method \texttt{indexOf(Object o)} returns the index of the first occurrence of the specified element in this list, or $-1$ if this list does not contain the element. However due to the overflow, it is possible to have an element in the list associated to index $-1$, which breaks the contract of this method.

\inputjml{testcase4_output.txt}

\item Method \texttt{contains(Object o)} returns true if this list contains the specified element. If an element is associated with index $-1$, it will indicate wrongly that this particular element is not present in the list.

\inputjml{testcase5_output.txt}
\end{enumerate}

Specifically, method \texttt{letSizeOverflow()} adds $2^{31}$ elements that causes the overflow of \texttt{size}. Method \texttt{addElementsUntilSizeIs0()} first adds $2^{32}-1$ elements: the value of \texttt{size} is then $-1$. Then, it adds the last element, and \texttt{size} is $0$ again. All elements added are \texttt{null}, except for the last element. For test cases 4 and 5, we deliberately misuse the overflow bug to associate an element with index $-1$. This means that method \texttt{indexOf(Object)} for this element returns $-1$, which according to the documentation means that the element is not present. For test cases 1, 2 and 3 we needed 65 gigabytes of memory for the JRE on a VM with 67 gigabytes of memory.  For test cases 4 and 5 we needed 167 gigabytes of memory for the JRE on a VM with 172 gigabytes of memory. All test cases were carried out on a machine in a private cloud (SURFsara), which provides instances that satisfy these system requirements.

\subsection{Mitigation}\label{subsec:solutions}
There are multiple directions for mitigating the overflow bug: \emph{do not fix}, \emph{fail fast}, \emph{long size field} and \emph{long or \texttt{BigInteger} indices}. Due to lack of space, we describe only the \emph{fail fast} solution. This solution stays reasonably close to the original implementation of \texttt{LinkedList} and does not leave any behavior unspecified.

%\paragraph{Do not fix} A possible solution is to leave the \texttt{LinkedList} implementation as it is, and specify only part of its behavior. Methods are unspecified in the case in which the real size is larger than \texttt{Integer.MAX\_VALUE}. This implies that every behavior of methods is acceptable when size becomes too large, including the behavior we have observed in Section~\ref{sec:bugs}. The documentation of the implementation could be changed to reflect this choice.
%
%Although this solution requires the least amount of effort, it might lead to surprises to developers who encounter erratic behavior of the implementation. Since part of the behavior is unspecified, this behavior could even change between versions\footnote{Such a change has already occurred in the past: in OpenJDK's 2014 commit, ``Pico-optimize \texttt{contains} method'', the method \texttt{contains(Object)} changed from \texttt{\textbf{return} indexOf(o) != -1;} to \texttt{\textbf{return} indexOf(o) >= 0;} which means that after adding $2^{31}$ elements, the last element (with negative index $-2^{31}$) that used to be contained, no longer is contained after applying this commit.}.

In the \emph{fail fast} solution, we ensure that the overflow of \texttt{size} may never occur. Whenever elements would be added that cause the size field to overflow, the operation throws an exception and leaves the list unchanged. As the exception is triggered right before the overflow would otherwise occur, the value of \texttt{size} is guaranteed to be bounded by \texttt{Integer.MAX\_VALUE}, i.e. it never becomes negative.

This solution requires a slight adaptation of the implementation: for methods that increase the \texttt{size} field, only one additional check has to be performed before a \texttt{LinkedList} instance is modified. This checks whether the result of the method causes an overflow of the \texttt{size} field. Under this condition, an \texttt{IllegalStateException} is thrown. Thus, only in states where \texttt{size} is less than \texttt{Integer.MAX\_VALUE}, it is acceptable to add a single element to the list.

We shall work in a separate class called \texttt{BoundedLinkedList}: this is the improved version that does not allow more than $2^{31}-1$ elements. Compared to the original \texttt{LinkedList}, two methods are added, \texttt{isMaxSize()} and \texttt{checkSize()}:

\inputjml{checkSize.java}

\noindent These methods implement an overflow check. The latter method is called before any modification occurs that increases the size by one: this ensures that \texttt{size} never overflows. Some methods now differ when compared to the original \texttt{LinkedList}, as they involve an invocation of the \texttt{checkSize()} method.

\section{\label{sec:improved}Specification and verification of \texttt{BoundedLinkedList}}

The aim of our specification and verification effort is to verify formalizations of the given Javadoc specifications (stated in natural language) of the \texttt{LinkedList}. This includes establishing absence of overflow errors. Moreover, we restrict our attention only to \texttt{BoundedLinkedList} and not to the rest of the Collection framework or Java classes: methods that involve parameters with interface types, Java serialization or Java reflection are not considered.

%\inputjml{methods_out_of_scope.java}

\texttt{BoundedLinkedList} inherits from \texttt{AbstractSequentialList}, but we consider its inherited methods out of scope. These methods operate on other collections such as \texttt{removeAll} or \texttt{containsAll}, and methods that have other classes as return type such as \texttt{iterator}. However, these methods call methods overloaded by \texttt{BoundedLinkedList}, and can not cause an overflow by themselves.

We have made use of KeY's stub generator to generate dummy contracts for other classes that \texttt{BoundedLinkedList} depends on, such as for the inherited interfaces and abstract super classes. The stub generator moreover deals with generics by erasing the generic type parameters. We assume for exceptions that their constructors are \emph{pure}. An important stub contract is the equality method of the absolute super class \texttt{Object}, which we have adapted: we assume that every object has a \emph{side-effect free}, \emph{terminating} and \emph{deterministic} implementation of its equality method\footnote{In reality, there are Java classes for which equality is not terminating. A nice example is \texttt{LinkedList} itself, where adding a list to itself easily leads to a \texttt{StackOverflowError} when testing equality with a similar instance. We consider this issue out of scope as this behavior is explicitly described by the Javadoc.}.

\inputjml{Object.java}

\subsection{Specification}\label{subsec:specification}
Following our workflow, we have iterated a number of times before the specifications we present here were obtained. This is a costly procedure, as revising some specifications requires redoing most verification effort. Until sufficient information is present in the specification, proving, for example, termination of a method is difficult or even impossible: from stuck verification attempts, and an intuitive idea of why a proof is stuck, the specification is revised.

\emph{Ghost fields.} We use JML's ghost fields: these are logical fields that for each object gets a value assigned in a heap. The value of these fields are conceptual, i.e. only used for specification and verification purposes. During run-time, this field is not present and cannot affect the course of execution. Our improved class is annotated with two ghost fields: \texttt{nodeList} and \texttt{nodeIndex}.

The type of the \texttt{nodeList} ghost field is an abstract data type of sequences, a KeY built-in. This type has standard constructors and operations that can be used in contracts and in JML set annotations. A sequence has a length, which is finite but unbounded. The type of a sequence's length is \texttt{\bs bigint}. In KeY a sequence is unityped: all its elements are of the \emph{any} sort, which can be any Java object reference or primitive, or built-in abstract data type. One needs to apply appropriate casts and track type information for a sequence of elements in order to cast elements of the \emph{any} sort to any of its subsorts.

The \texttt{nodeIndex} ghost field is used as a ghost parameter with unbounded but finite integers as type. This ghost parameter is only used for specifying the behavior of the methods \texttt{unlink(Node)} and \texttt{linkBefore(Object, Node)}. The ghost parameter tracks at which index the \texttt{Node} argument is present in the \texttt{nodeList}. This information is implicit and not needed at run-time. 

\emph{Class invariant.} The ghost field \texttt{nodeList} is used in the class invariant of our improved implementation, see below. We relate the fields \texttt{first} and \texttt{last} that hold a reference to a \texttt{Node} instance, and the chain between \texttt{first} and \texttt{last}, to the contents of the sequence in the ghost field \texttt{nodeList}. This allows us to express properties in terms of \texttt{nodeList}, where they reflect properties about the chain on the heap. One may compare this invariant with the description of chains as given in Section \ref{sec:linkedlist}.

\inputjmlln{invariant.java}

The actual size of a linked list is the length of the ghost field \texttt{nodeList}, whereas the cached size is stored in a 32-bit signed integer field \texttt{size}. On line~4, the invariant expresses that these two must be equal. Since the length of a sequence (and thus \texttt{nodeList}) is never negative, this implies that the size field never overflows. On line~5, this is made explicit: the real size of a linked list is bounded by \texttt{Integer.MAX\_VALUE}. Line~5 is redundant as it follows from line~4, since a 32-bit integer never has a value larger than this maximum value. The condition on lines 6--7 requires that every node in \texttt{nodeList} is an instance of \texttt{Node} which implies it is non-\texttt{null}.

A linked list is either empty or non-empty. On line 8, if the linked list is empty, it is specified that \texttt{first} and \texttt{last} must be \texttt{null} references. On lines 9--12, if the linked list is non-empty, it is specified that \texttt{first} and \texttt{last} are non-\texttt{null} and moreover that the \texttt{prev} field of the first \texttt{Node} and the \texttt{next} field of the last \texttt{Node} are \texttt{null}. The \texttt{nodeList} must have as first element the node pointed to by \texttt{first}, and \texttt{last} as last element.
In any case, but vacuously true if the linked list is empty, does the \texttt{nodeList} form a chain of nodes: lines 13--16 describe that, for every node at index $0 < i < \mathtt{size}$, the \texttt{prev} field must point to its predecessor, and similar for successor nodes.

We note three interesting properties that are implied by the above invariant: acyclicity, unique first and unique last node. These properties can be formulated by JML formulas as follows. These properties are not part of our invariant; instead they introduced interactively in KeY. Otherwise, we need to reestablish these properties each time we show the invariant holds too.

\inputkey{invariant_follows.java}

\begin{figure}
\inputjml{lastIndexOf_on.java}
    \caption{Method \texttt{lastIndexOf(Object)} annotated with JML. Searches the list from last to first for an element. Returns $-1$ if this element is not present in the list; otherwise returns the index of the node that was equal to the argument. Only the contract and branch in which the argument is non-\texttt{null} is shown due to space restrictions. Methods such as \texttt{indexOf}, \texttt{removeFirstOccurrence} and \texttt{removeLastOccurrence} are very similar.}
    \label{lst:indexOf}
\end{figure}

\emph{Methods.} All methods within our scope are given a JML contract that specify its normal behavior and its exceptional behavior. As for an example contract, consider the \texttt{lastIndexOf(Object)} method in Figure~\ref{lst:indexOf}: it searches through the chain of nodes until it finds a node with an item equal to the argument. This method is interesting due to its potential overflow behavior of the resulting index.
\texttt{BoundedLinkedList} together with all method specifications are available \cite{hiep2019prooffiles}.
%Another example is the method \texttt{clear()}. It clears the fields \texttt{item}, \texttt{prev}, and \texttt{next} of all the nodes in the list to aid the garbage collector. Then the value of field \texttt{size} is set to \texttt{0}, and the fields \texttt{first} and \texttt{last} are set to \texttt{null}. The difficulty in this method is to show termination of this loop, as the class invariant of the linked list is broken while the loop is in progress. Its contract and loop invariant are depicted in Figure~\ref{lst:clear}.

\subsection{Verification}\label{subsec:verification}

We start by giving a general strategy we apply to verify proof obligations. We also describe in more detail how to produce a single proof, in this case \texttt{lastIndexOf(Object)}. This gives a general feel how proving in KeY works. This method is neither trivial, nor very complicated to verify. In this manner, we have produced proofs for each method contract that we have specified.

\emph{Overview of verification steps.} When verifying a method, KeY first has to perform symbolic execution. Symbolic execution transforms modal operators on program fragments into \texttt{JavaDL}. We have KeY do this by applying a macro. When doing this, simplification rules are also applied automatically. We keep in mind that the class invariant contains disjunction, and in case we do not want them to be split during the execution of the macro. KeY has to be instructed to delay unfolding the invariant. When symbolic execution is finished, goals may contains heap expressions that must be simplified. When this is done for all heap expressions, the open goals look as simple as they can be before simplifying them further. This in general may be a good moment to compare the open goals to the method and its annotations, and see whether things in KeY look familiar at this point. In the remaining part of the proof the user must find an appropriate mix between interactive and automatic steps.

There are many ways to construct a closed proof tree. At (almost) every step the user has a choice between applying steps manually or automatically. When applying a manual step, different choices can be made which rule in what order to apply where: it takes some experience to choose the best rule. An important rule is the cut rule that splits a proof tree into two parts. The cut rule significantly reduce the size of a proof and the effort required to produce it. For example, the acylicity property can be introduced using cut.

\emph{Verification example.} The method \texttt{lastIndexOf} has two contracts: one involves a \texttt{null} argument, and another involves a non-\texttt{null} argument. Both proofs are similar. Moreover, the proof for \texttt{indexOf(...)} is similar but involves the \texttt{next} reference instead of the \texttt{prev} reference. This contract is interesting, since proving its correctness shows the absence of the overflow bug.

\begin{prop}
\emph{\texttt{lastIndexOf(Object)}} as specified in Figure~\ref{lst:indexOf} is correct.
\end{prop}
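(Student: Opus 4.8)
The plan is to discharge the two JML contracts of \texttt{lastIndexOf(Object)} separately --- the one with a \texttt{null} argument and the one with a non-\texttt{null} argument --- following the general strategy of Section~\ref{subsec:verification}. The two proofs have the same shape, differing only in whether the search compares with \texttt{x.item == null} or with \texttt{o.equals(x.item)}, so I describe the non-\texttt{null} branch that is shown in Figure~\ref{lst:indexOf}. First I would load the method, start the proof for this contract, and run the symbolic-execution macro while instructing KeY to delay unfolding the class invariant, so that its disjunction (empty versus non-empty list) does not split the proof tree prematurely. Afterwards I would simplify the resulting heap expressions until the open goal lines up with the backwards loop \texttt{for (Node x = last; x != null; x = x.prev)} and the counter \texttt{index} of the source, which is where the real work begins.

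The heart of the proof is the loop invariant for that walk. Writing $n$ for \texttt{size}, which equals the length of \texttt{nodeList}, the invariant I would supply states: (i) the class invariant still holds and $0 \le \texttt{index} \le n$ --- nothing on the heap is modified, so the method is pure; (ii) \texttt{x} is consistent with \texttt{index}: if \texttt{x != null} then $\texttt{x} = \texttt{nodeList}[\texttt{index}-1]$ (hence $\texttt{index} > 0$), and if \texttt{x == null} then $\texttt{index} = 0$; (iii) the element has not been found in the suffix already examined: for every $j$ with $\texttt{index} \le j < n$, the item at node $\texttt{nodeList}[j]$ is not \texttt{o.equals} to \texttt{o}. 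The decreasing term is just \texttt{index}. Applying the \texttt{loopInvariant} rule yields three goals. Initial validity uses the non-empty case of the class invariant to get $\texttt{last} = \texttt{nodeList}[n-1]$, and the empty case to get \texttt{x == null} together with $\texttt{index} = n = 0$. For preservation I would case-split on the body's conditional: in the returning branch, the guard gives $\texttt{o.equals}$ of the item at $\texttt{nodeList}[\texttt{index}]$ after the decrement, while clause (iii) rules out any larger index, and a short calculation on the offset shows this is exactly the normal postcondition; in the continuing branch, clause (iii) is extended by the node just examined (since we did \emph{not} return), and $\texttt{x.prev} = \texttt{nodeList}[\texttt{index}-1]$ --- or \texttt{null} precisely when we reach index $0$ --- follows from the \texttt{prev}-links and non-\texttt{null}-node parts of the class invariant, re-establishing clauses (ii) with the decremented counter. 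The use case (\texttt{x == null}) forces $\texttt{index} = 0$, so clause (iii) now ranges over all valid indices and \texttt{return -1} is correct. I would introduce the unique-first-node property (and, where the walk's correspondence with \texttt{nodeList} needs it, acyclicity) of Section~\ref{sec:linkedlist} interactively via the cut rule, and rely on the stub assumption that \texttt{equals} is side-effect free, terminating and deterministic to match the comparison done in the body against the one occurring in the quantified postcondition.

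Absence of the overflow bug then comes essentially for free: the class invariant bounds \texttt{size} by \texttt{Integer.MAX\_VALUE}, \texttt{index} is initialised to \texttt{size} and only ever decremented, and by clause (i)--(ii) it stays within $[0,n]$ inside the loop, so neither \texttt{index--} nor the returned value can overflow or underflow --- which is precisely the behaviour that was broken in the original \texttt{LinkedList}. The remaining arithmetic side conditions and the framing conditions I would close with a mix of automatic strategy runs and targeted manual rule applications. The step I expect to be the main obstacle is getting the loop invariant exactly right --- in particular the $\pm 1$ offset relating \texttt{x} and \texttt{index} and the precise range of the ``not yet found'' quantifier, which must be vacuous on entry --- and coaxing KeY into using the chain structure of the class invariant to re-derive $\texttt{x.prev} = \texttt{nodeList}[\texttt{index}-1]$ at each iteration while keeping the invariant's empty/non-empty disjunction from blowing up the case analysis.
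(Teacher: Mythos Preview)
Your proposal is correct and follows essentially the same route as the paper: run symbolic execution with the class invariant (``class axiom'') delayed, let the automated strategy close what it can, and then discharge the residual loop-invariant preservation goals by appealing to the chain clauses of the class invariant. The two cases the paper isolates interactively---the biconditional $\mathit{index}-1=0 \leftrightarrow x.\texttt{prev}=\texttt{null}$ and the equation $\texttt{nodeList}[\mathit{index}-2]=x.\texttt{prev}$---are precisely the re-establishment of your clause~(ii) after the decrement, and the paper closes them exactly as you sketch (instantiate the $\forall i$ on \texttt{prev}-links, use non-\texttt{null}ness of nodes, and do equational rewriting).

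Two small deviations are worth flagging. First, the loop invariant is not something you supply during the proof: it is already present as JML \texttt{maintaining} clauses in the annotated source of Figure~\ref{lst:indexOf}, so KeY picks it up automatically when the \texttt{loopInvariant} rule fires. Second, the paper does \emph{not} introduce acyclicity or the unique-first-node lemma via cut for this method. Because \texttt{lastIndexOf} is read-only, no aliasing/separation argument is needed; the two residual goals close directly from the raw invariant clauses. Your proposed cuts would not be wrong, but they are heavier machinery than required here---the paper reserves the acyclicity cut for heap-mutating methods such as \texttt{clear}, \texttt{linkLast}, and \texttt{unlink}.
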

\begin{proof}
Set strategy to default strategy, and set max. rules to 5,000, class axiom delayed. Finish symbolic execution on the main goal. Set strategy to 1,000 rules and select DefOps arithmetic rules. Close all provable goals under the root. One goal remains. Perform update simplification macro on the whole sequent, perform propositional with split macro on the consequent with conjunctions, and close provable goals on the resulting subtree. There are two remaining cases:

\begin{itemize}
    \item Case $\mathit{index} - 1 = 0 \leftrightarrow x.\mathtt{prev} = \mathtt{null}$: split the equivalence. First case, suppose $\mathit{index} - 1 = 0$, then $x = \mathit{self}.\mathtt{nodeList}[0]=\mathit{self}.\mathtt{first}$ and\\
    $\mathit{self}.\mathtt{first}.\mathtt{prev}=\mathtt{null}$: solvable through unfolding the invariant and equational rewriting. Now, second case, suppose $x.\mathtt{prev} = \mathtt{null}$. Then, either $\mathit{index} = 1$ or $\mathit{index} > 1$ (from splitting $\mathit{index} \geq 1$). The first of which is trivial (close provable goal), and the second one requires instantiating quantified statements from the invariant, leading to a contradiction. Since we have supposed $x.\mathtt{prev} = \mathtt{null}$, but $x = \mathit{self}.\mathtt{nodeList}[\mathit{index} - 1]$ and $\mathit{self}.\mathtt{nodeList}[\mathit{index} - 1].prev = \mathit{self}.\mathtt{nodeList}[\mathit{index} - 2]$ and\\
    $\mathit{self}.\mathtt{nodeList}[\mathit{index} - 2] \neq \mathtt{null}$.
    \item Case $\mathit{self}.\mathtt{nodeList}[\mathit{index} - 2] = x.\mathtt{prev}$. Follows from invariant, where $\mathit{self}.\mathtt{nodeList}[\mathit{index} - 1] = x$ and\\
    $\mathit{self}.\mathtt{nodeList}[\mathit{index} - 1].\mathtt{prev} = \mathit{self}.\mathtt{nodeList}[\mathit{index} - 2]$, and equational rewriting.
\end{itemize}
\vspace*{-2em}
\end{proof}

\emph{Interesting verification conditions.} The acyclicity property is used to close verification conditions that arise as a result of potential aliasing of node instances: it is used as a separation lemma. Whenever a method changes the \texttt{next} or \texttt{prev} fields of existing node(s) (see e.g. \texttt{linkLast} on page \pageref{lst:linkLast}), we must establish that in the new list, all nodes remain reachable from the first through \texttt{prev} and \texttt{next} (i.e., ``connectedness''). We proved this by using the fact that two nodes instances are different if they have a different index in \texttt{nodeList}, which follows from acyclicity. We sketch an argument why the acyclicity property follows from the invariant. Below we show how the argument in KeY goes, see \cite[0:55--11:30]{Bian2019}.

\begin{prop}
Acyclicity follows from the linked list invariant.
\end{prop}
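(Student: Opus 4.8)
The plan is to prove acyclicity --- that $\sigma[i] \neq \sigma[j]$ whenever $0 \le i < j < n$, writing $\sigma$ for \texttt{nodeList} and $n$ for its length --- by induction on the first index $i$. The empty-list disjunct of the invariant makes the statement vacuous (there are no such indices when $n = 0$), so one argues under the non-empty disjunct, where the invariant supplies $\sigma[0] = \mathtt{first}$, $\mathtt{first}.\mathtt{prev} = \mathtt{null}$, the fact that every $\sigma[k]$ with $0 \le k < n$ is a \texttt{Node} and hence non-\texttt{null}, and $\sigma[k].\mathtt{prev} = \sigma[k-1]$ for every $0 < k < n$. In KeY this property is introduced with a \texttt{cut}: one adds the quantified formula $\forall i.\,\forall j.\,(0 \le i < j < n \rightarrow \sigma[i] \neq \sigma[j])$ as a new antecedent, discharges it by the induction rule, and thereafter the connectedness obligations arising in methods such as \texttt{linkLast} merely instantiate it.

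For the base case $i = 0$, given $0 < j < n$: the invariant gives $\sigma[0].\mathtt{prev} = \mathtt{first}.\mathtt{prev} = \mathtt{null}$, instantiating the \texttt{prev}-chain conjunct at $j$ gives $\sigma[j].\mathtt{prev} = \sigma[j-1]$, and instantiating the node-typing conjunct at index $j-1$ (legal since $0 \le j-1 < n$) gives $\sigma[j-1] \neq \mathtt{null}$; hence $\sigma[0].\mathtt{prev} \neq \sigma[j].\mathtt{prev}$ and so $\sigma[0] \neq \sigma[j]$. For the step, assume the claim for $i$ and take $i+1 < j < n$; suppose toward a contradiction $\sigma[i+1] = \sigma[j]$. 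Applying the \texttt{prev}-chain conjunct at $i+1$ and at $j$ --- both indices lie strictly between $0$ and $n$ --- rewrites this equality to $\sigma[i] = \sigma[j-1]$, and since $i < j-1 < n$ this contradicts the induction hypothesis at $i$ and $j-1$. Once the right instances are on the sequent, everything is equational rewriting with sequence access.

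I expect the main obstacle to be not the mathematical content but its bookkeeping inside KeY: picking the induction formula and driving KeY's induction taclet with it, supplying by hand the several instantiations of the invariant's \texttt{prev}-chain and node-typing conjuncts at exactly $j$, $j-1$, and $i+1$, providing the cast and type information needed because \texttt{nodeList} is a sequence over the \emph{any} sort --- so each accessed element must be cast to \texttt{Node} before its \texttt{prev} field may be read --- and keeping the class-axiom unfolding delayed so that the invariant's disjunction is not split prematurely. A symmetric alternative inducts downward on $j$ from $n-1$, using $\mathtt{last}.\mathtt{next} = \mathtt{null}$ and the \texttt{next}-chain conjunct instead; one picks whichever direction matches the \texttt{prev}/\texttt{next} reasoning already present in the surrounding proof.
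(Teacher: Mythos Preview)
Your argument is correct, but it is organized differently from the paper's. The paper argues by contradiction from a single fixed violating pair $(i,j)$ and then runs a forward induction on a \emph{fresh} variable $k$ along the \texttt{next}-chain: from $\sigma[i]=\sigma[j]$ it shows $\sigma[k]=\sigma[k-(j-i)]$ for all $j\le k<n$, specializes to $k=n-1$, and obtains a clash between $\sigma[n-1].\texttt{next}=\texttt{null}$ and $\sigma[n-1-(j-i)].\texttt{next}\neq\texttt{null}$. You instead prove the universally quantified acyclicity statement directly by induction on the first index $i$ along the \texttt{prev}-chain, with the base case anchored at $\texttt{first.prev}=\texttt{null}$ and the step reducing $\sigma[i+1]=\sigma[j]$ to $\sigma[i]=\sigma[j-1]$ via \texttt{prev}. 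Your route avoids introducing the auxiliary propagation lemma and the extra cut that the paper uses in KeY, at the price of carrying a universally quantified induction hypothesis (over $j$) rather than working with a fixed pair; the paper's route keeps $(i,j)$ rigid and only the auxiliary variable moves, which is what their recorded KeY proof actually does. Your ``symmetric alternative'' using \texttt{next} and the last node is the one closer in spirit to the paper, though still structured as a direct induction rather than the paper's contradiction-plus-propagation shape.
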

\begin{proof}
By contradiction: suppose a linked list of size $n > 1$ is not acyclic. Then there are two indices, $0\leq i<j<n$, such that the nodes at index $i$ and $j$ are equal. Then it must hold that for all $j \leq k < n$, the node at $k$ is equal to the node at $k-(j-i)$. This follows from induction. Base case: if $k = j$, then node $j$ and node $j-(j-i)=i$ are equal by assumption. Induction step: suppose node at $k$ is equal to node at $k-(j-i)$, then if $k+1<n$ it also holds that node $k+1$ equals node $k+1-(j-i)$: this follows from the fact that node $k+1$ and $k+1-(j-i)$ are both the \texttt{next} of node $k<n-1$ and node $k-(j-i)$. Since the latter are equal, the former must be equal too. Now, for all $j \leq k < n$, node $k$ equals node $k-(j-i)$ in particular holds when $k=n-1$. However, by the property that only the last node has a \texttt{null} value for \texttt{next}, and a non-last node has a non-\texttt{null} value for its \texttt{next} field, we derive a contradiction: if nodes $k$ and $k-(j-i)$ are equal then all their fields must also have equal values, but node $k$ has a \texttt{null} and node $k-(j-i)$ has a non-\texttt{null} next field!
\end{proof}

%TODO: clear snippets

\emph{Summary of verification effort.} The total effort of our case study was about 7 man months. The largest part of this effort is finding the right specification. KeY supports various ways to specify Java code: model fields/methods, pure methods, and ghost variables. For example, using pure methods, contracts are specified by expressing the content of the list before/after the method using the pure method \text{get(i)}, which returns the $i$\textsuperscript{th} item. This led to rather complex proofs: essentially it led to reasoning in terms of relational properties on programs (i.e. \texttt{get(i)} before vs \text{get(i)} after the method under consideration). After 2.5 man months of writing partial specifications and partial proofs in these different formalisms, we decided to go with ghost variables as this was the only formalism in which we succeeded to prove non-trivial methods.

It then took $\approx$ 4 man months of iterating in our workflow through (failed) partial proof attempts and refining the specs until they were sufficiently complete.
In particular, changes to the class invariant were ``costly'', as this typically caused proofs of all the methods to break (one must prove that all methods preserve the class invariant). The possibility to interact with the prover was crucial to pinpoint the cause of a failed verification attempt, and we used this feature of KeY extensively to find the right changes/additions to the specifications.

%In general there seems to be a hierarchy of specification levels: ghost fields, class invariants, method contracts, and loop invariants. Changing one of a former specification requires revisiting one of the latter. For example, without proper ghost fields it is difficult to express a useful class invariant. Method contracts implicitly depend on the class invariant. For a couple of methods, loop invariants had to be added as annotation. We almost always needed some iterations before finding a loop invariant that was sufficient. Moreover, a change in the method contract often implies any of its loop invariant need to be changed as well, to maintain the properties needed to show the contract is valid.

After the introduction of the field \texttt{nodeList}, several methods could be proved very easily, with a very low number of interactive steps or even automatically. Methods \texttt{unlink(Node)} and \texttt{linkBefore(Object, Node)} could not be proven without knowing the position of the node argument. We introduced a new ghost field, \texttt{nodeIndex}, that acts like a ghost parameter. Luckily, this did not affect the class invariant, and existing proofs that did not make use of the new ghost field were unaffected.

Once the specifications are (sufficiently) complete, we estimate that it only took approximately 1 or 1.5 man weeks to prove all methods. This can be reduced further if informal proof descriptions are given. See Appendix \ref{sec:proofs} for some example descriptions. Moreover, we have recorded a video of a 30 minute proof session where the method \texttt{unlinkLast} is proven correct with respect to its contract \cite{Bian2019}.

\emph{Proof statistics.} The below table summarizes the main proof statistics for all methods. The last two columns are not metrics of the proof, but they indicate the total lines of code (LoC) and the total lines of specifications (LoSpec).

\begin{center}
\begin{tabular}{|c|c|c|c|c|c|c|c|}
\hline
Rules & Branches & Interactive & Quant.ins & Contract & LoopInv & LoC & LoSpec \\
\hline
369,026 & 2,422 & 9,605 & 2,271 & 77 & 12 & 328 & 632 \\
\hline
\end{tabular}
\end{center}

We found the most difficult proofs were for the method contracts of: \texttt{clear()}, \texttt{linkBefore(Object,Node)}, \texttt{unlink(Node)}, \texttt{node(int)} and \texttt{remove(Object)}. The number of interactive steps seem a rough measure for effort required. But, we note that it is not a reliable representation of the difficulty of a proof: an experienced user can produce a proof with very few interactive steps, while an inexperienced user may take many more steps. The proofs we have produced are by no means minimal.

\section{Discussion}\label{sec:discussion}
In this section we discuss some of the main challenges of verifying the real-world Java implementation of a \texttt{LinkedList}, as opposed to the analysis of an idealized mathematical linked list.
%In this section, we discuss several challenges the \texttt{LinkedList} case study poses to analysis tools based on our experiences.
%we discuss several challenges encountered in the \texttt{LinkedList} case study. Many of the challenges are relevant beyond the case study and apply to real-world library/program verification in general.

\emph{Extensive use of Java language constructs.}
The \texttt{LinkedList} class uses a wide range of Java language features. This includes nested classes (both static and non-static), inheritance, polymorphism, generics, exception (handling), object creation and foreach loops.
To load and reason about the real-world \texttt{LinkedList} source code, requires an analysis tool with high coverage of the Java language, including support for the aforementioned language features.
%TODO: new Java release cycle? new feature release every 6 months! how can analysis tools keep up?
% reflection?

\emph{Support for intricate Java semantics.} The Java \texttt{List} interface is position based, and associates with each item in the list an index of Java's int type. The bugs described in Section~\ref{sec:bugs} were triggered on large lists, in which integer overflows occurred. Thus, while an idealized mathematical integer semantics is much simpler for reasoning, it could not be used to analyze the bugs we encountered! It is therefore critical that the analysis tool faithfully supports Java's semantics, including Java's integer (overflow) behavior.

\emph{Collections have a huge state space.} A Java collection is an object that contains other objects (of a reference type). Collections can typically grow to an arbitrary (but typically bounded) size. By their very nature, collections thus intrinsically have a large state. To make this more concrete: triggering the bugs in \texttt{LinkedList} requires at least $2^{31}$ elements (and 64 GiB of memory), and each element, since it is of a reference type, has at least $2^{32}$ values. This poses serious problems to fully automated analysis methods that explore the state space.
\emph{Interface specifications.} Several of the \texttt{LinkedList} methods contain an interface type as parameter, for example the \texttt{addAll} method:
\inputjml{addAll_c_original.java}
As KeY follows the design by contract paradigm, verification of \texttt{LinkedList}'s \texttt{addAll} method requires a contract for each of the other methods called, including the \texttt{toArray} method in the \texttt{Collection} \emph{interface}. Hence, the question arises: how can we specify interface methods, such as \texttt{Collection.toArray}?
Simple conditions on parameters or the return value are easily expressed, but meaningful contracts about the contents of the collection require some notion of state to capture all mutations of the collection, so that previous calls to methods in the interface that contributed to the current content of the collection are taken into account.
Model fields/methods are a widely used mechanism to define an abstract state, given by one or more model variables, in terms of the concrete state given (by the fields) in a concrete class.
In this case, as only the interface type \texttt{Collection} is known rather than a concrete class, a represents clause cannot be defined. Thus the behavior of the interface cannot be fully captured by specifications in terms of model fields/variables, including for methods such as \texttt{Collection.toArray}. Ghost variables cannot be used either, since ghost variables are updated by adding set statements in method bodies, and interfaces do not contain method bodies. This raises the question: how to specify behavior of interface methods?\footnote{Since the representation of classes that implement the interface is unknown in the interface itself, a particularly challenging aspect here is: how to specify the footprint of an interface method, i.e.: what part of the heap can be modified by the method in the implementing class?}
%TODO (RE)MOVE solution?

\emph{Verifiable code revisions.} We fixed the \texttt{LinkedList} class by explicitly bounding its maximum size to \texttt{Integer.MAX\_Value} elements, but other solutions are possible.
Rather than using integers indices for elements, one could change to an index of type \texttt{long} or \text{BigInteger}. Such a code revision is however incompatible with the general \texttt{Collection} and \texttt{List} interfaces (whose method signatures mandate the use integer indices), thereby breaking all existing client code that uses \texttt{LinkedList}.
Clearly this is not an option in a widely used language like Java, or any language that aims to be backwards compatible.

It raises the challenge: can we find code revisions that are compatible with existing interfaces and client classes? We can take this challenge even further:
can we use our workflow to find such compatible code revisions, \emph{and are also amenable to formal verification?} The existing code in general is not designed for verification. For example, the \texttt{LinkedList} class exposes several implementation details to classes in the \text{java.util} package: i.e., all fields, including \texttt{size}, are package private (not private!), which means they can be assigned a new value directly (without calling any methods) by other classes in that package. This includes setting \texttt{size} to negative values (!). As we have seen, the class malfunctions for negative \texttt{size} values.
In short, this means that the \texttt{LinkedList} itself cannot enforce its own invariants anymore: its correctness now depends on the correctness of other classes in the package. The possibility to avoid calling methods to access the lists field may yield a small performance gain, but it precludes a modular analysis: to assess the correctness of \texttt{LinkedList} one must now analyze all classes in the same package (!) to determine whether they make benign changes (if any) to the fields of the list. Hence, we recommend to encapsulate such implementation details, including making at least all fields \texttt{private}.

\emph{Proof reuse.} Section~\ref{subsec:verification} discussed the proof effort (in person months). It revealed that while the total effort was 6-7 person months, once the specifications are in place after many iterations of the workflow, producing the actual final proofs took only 1-2 weeks! But minor specification changes often require to redo nearly the whole proof, causing an explosion in the amount of effort needed.
Other program verification case studies~\cite{Baumann12,GouwBR14,GouwRBBH15,KeYbook} show similarly that the main bottleneck today is specification, not verification.
This calls for techniques to optimize proof reuse.

\emph{Status of the challenges.} Our case study raised various challenges for mechanized verification.
The KeY system covered the Java language features sufficiently to load and statically verify the \texttt{LinkedList} source code. KeY also supports various integer semantics, allowing us to analyze \text{LinkedList} with the actual Java integer overflow semantics.
As KeY is a theorem prover (based on deductive verification), it does not explore the state space of the class under consideration, thus solving the problem of the huge state space of Java collections. We could not find any other tools that solved these challenges (see also related work below), so we decided at that point to use KeY.

Most of the other challenges are still open. The challenge concerning ``Interface specifications'' could perhaps be addressed by defining an abstract state of an interface by using/developing some form of a trace specification that map a sequence of calls to the interface methods to a value, together with a logic to reason about such trace specifications.

The challenges related to code revisions and proof reuse are compounded for analysis tools that use very fine-grained proof representations.
For example, proofs in KeY consist of actual rule applications (rather than higher level macro/strategy applications), and proof rule applications explicitly refer to the indices of the (sub) formulas the rule is applied to.
This results in a fragile proof format, where small changes to the specifications or source code (such as a code refactoring) break the proof.

Other state-of-the-art systems such as Coq, Isabelle and PVS support proof \emph{scripts}. Those proofs are described at a typically much more course-grained level when compared to KeY. It would be interesting to see to what extent Java language features and semantics can be handled in (extensions of) such higher level proof script languages.

\subsection{\label{sec:related}Related work}
Kn{\"u}ppel et al.~\cite{knuppel2018experience} provide a report on the specification and verification
of some methods of the classes ArrayList, Arrays, and Math of the OpenJDK Collections framework using KeY.
Their report is mainly meant as a  ``stepping stone towards a case study for future research.''
To the best of our knowledge,  no formal specification and verification of the actual Java implementation of a linked list has been investigated. 
In general, the data structure of a linked list has been studied mainly in terms of pseudo code of an idealized mathematical abstraction (see \cite{polikarpova2015fully} for an Eiffel version and \cite{KlebanovMSLWAABCCHJLMPPRSTTUW11} for a Dafny version).

This paper (and \cite{knuppel2018experience}) has shown that the specification and verification of actual library software poses a number of serious challenges to formal verification. In our case study, we used KeY to verify Java's linked list. Other systems are also used to formalize Java, such as the general purpose theorem prover Isabelle/HOL~\cite{Nipkow:1998:JLT:268946.268960,Klein:2006:MMJ:1146809.1146811}, or OpenJML~\cite{Cok14}, a prover dedicated to Java programs. However, these formalizations do not have a complete enough Java semantics to be able to analyze the bugs presented in this paper. In particular, these formalizations have no support for integer overflow arithmetic.

% TODO? OpenJDK bug tracker (2003)

%Other linkedlist verifications: Dafny linkedlist, A fully verified container library, Polikarpova et al \cite{polikarpova2015fully}

%Isabelle/HOL Java-level formalizations: JavaLight, otherwise known as Bali, do not use modulo semantics for integers, but instead use mathematical integers \cite{Nipkow:1998:JLT:268946.268960}. The other well-known Isabelle/HOL formalization, Ninja, is not realistically modelling Java: it also does not use modulo semantics for integers \cite{Klein:2006:MMJ:1146809.1146811}.

%Other JVM-level formalizations: Executable JVM model for analytical reasoning: A study (Liu, Moore), CoJaq: a hierarchical view on the Java bytecodeformalised in Coq (Czarnik et al)

%Other specification languages related to JML: CASL specification language (Sannella, Tarlecki), K Framework (Rosu), Alloy (Jackson)

\newpage
\bibliographystyle{splncs04}
\bibliography{main}

\appendix

\newpage\section{\label{sec:proofs}Proof Descriptions}

\begin{table*}
    \centering
    \pgfplotstabletypeset[
    col sep=semicolon,
    string type,
    columns/Visibility/.style={column name=Visibility, column type={|l}},
    columns/Method/.style={column name=Method, column type={|l}},
    columns/Nodes/.style={column name=Nodes, column type={|r}},
    columns/Branches/.style={column name=Br., column type={|r}},
    columns/Interactive steps/.style={column name=I.steps, column type={|r}},
    columns/Quantifier instantiations/.style={column name=Q.ins, column type={|r}},
    columns/Operation Contract apps/.style={column name=C., column type={|r}},
    columns/Loop invariant/.style={column name=li, column type={|r}},
    columns/LOC/.style={column name=loc, column type={|r}},
    columns/LOS/.style={column name=los, column type={|r|}},
    every head row/.style={before row=\hline,after row=\hline},
    every last row/.style={before row=\hline,after row=\hline},
    ]{statistics/proof_statistics.csv}
    \vspace{4pt}
    \caption{Proof statistics of normal behavior contracts. See Table \ref{tab:proof_statistics_exc} for meaning of column abbreviations.}
    \label{tab:proof_statistics_normal}
\end{table*}

\begin{table*}
    \centering
    \pgfplotstabletypeset[
    col sep=semicolon,
    string type,
    columns/Visibility/.style={column name=Visibility, column type={|l}},
    columns/Method/.style={column name=Method, column type={|l}},
    columns/Exception/.style={column name=Exc, column type={|l}},
    columns/Nodes/.style={column name=Nodes, column type={|r}},
    columns/Branches/.style={column name=Br., column type={|r}},
    columns/Interactive steps/.style={column name=I.steps, column type={|r}},
    columns/Quantifier instantiations/.style={column name=Q.ins, column type={|r}},
    columns/Operation Contract apps/.style={column name=C., column type={|r}},
    columns/Loop invariant/.style={column name=li, column type={|r}},
    columns/LOC/.style={column name=loc, column type={|r}},
    columns/LOS/.style={column name=los, column type={|r|}},
    every head row/.style={before row=\hline,after row=\hline},
    every last row/.style={before row=\hline,after row=\hline},
    ]{statistics/proof_statistics_2.csv}
    \vspace{4pt}
    \caption{Proof statistics of exceptional behavior contracts. Column abbreviations: Exc is the exception thrown, Br. is the number of proof branches, I.steps is the number of interactive steps, Q.ins is the number of quantifier instantiations, C. is the number of method contracts applied, li is the number of loop invariants, loc are lines of code, and los are lines of specification. Exception abbreviations: IOOBE is index out of bounds exception, ISE is illegal state exception, NSEE is no such element exception}
    \label{tab:proof_statistics_exc}
\end{table*}

This section describes informally how some proofs can be produced using KeY version 2.6.3. This version of KeY and the proof files are available on-line \cite{hiep2019prooffiles}.

To produce proofs in KeY, the first step is to set-up KeY's taclet base to make use of particular groups of rules that correctly model Java's integer overflow semantics. This has to be done only once per set-up, as these taclet settings are stored per computer user. To do so, start KeY and load an example (File, Load Example, choose the first example). Then open Options, Taclet Options, and configure them as follows:

\begin{description}
\item [JavaCard] Off
\item [Strings] On
\item [Assertions] Safe
\item [BigInt] On
\item [Initialization] Disable static initialization
\item [Integer Rules] Java semantics
\item [Integer Simplification Rules] Full
\item [Join Generate Is Weakening Goal] Off
\item [Model Fields] Treat as axiom
\item [More Sequence Rules] On
\item [Permissions] Off
\item [Program Rules] Java
\item [Reach] On
\item [Runtime Exceptions] Ban
\item [Sequences] On
\item [Well-definedness Checks] Off
\item [Well-definedness Operator] L
\end{description}

After setting these taclet options, they become effective after loading the next problem. We do that now: the main proof file \texttt{LinkedList.key} can be loaded, and a contract selection window opens up. Each of the examples below correspond to one contract in this window.

In the following proof descriptions, not all steps are explicitly described, but they are not too difficult to find out. Moreover, variable names may depend on the past session and its user interaction, so they may be different in KeY than what is written here. What is important is that KeY's automated strategy has difficulty solving goals that involves variable casts. A typical approach is to manually instantiate part of the invariant that states that \texttt{nodeList}'s elements are an \texttt{instanceof} the \texttt{Node} class, and applying the rule for narrowing types to bring equations in a similar shape to close the goal. These steps are typically required at the end of a proof.

\newpage\subsection{\texttt{contains(Object)}}

The essence of this proof is to choose the right method contract when \texttt{indexOf} is called. The strategy and symbolic execution go as before, but this time we need to prune the proof right before where the method \texttt{indexOf} is called. We manually choose the right contract: when proving the contract where the argument is \texttt{null}, the corresponding contract for \texttt{indexOf} must be selected: similar for the non-\texttt{null} contract. Then, after continuing symbolic execution and closing the provable goals, there are two cases that remain:

\begin{itemize}
    \item If the result of \texttt{indexOf} is $-1$, then we need to show that all items are not-\texttt{null} (or not equal). This follows from the post-condition of the method contract.
    \item If the result of \texttt{indexOf} is not $-1$, then we need to instantiate the existential quantifier at some point with the result that \texttt{indexOf} has returned.
\end{itemize}

\subsection{\texttt{removeLastOccurrence(null)}}

This method also has two contracts, just like \texttt{lastIndexOf}. Again, the first steps of the proof are the same: setting default strategy and symbolic execution. The remaining cases are different:

\begin{itemize}
    \item Case where \texttt{if} in loop body evaluates to \texttt{false}. Perform update simplification, propositional with split, and close provable goals macros. One goal remains: $\mathit{index} - 1 = 0 \leftrightarrow x.\mathtt{prev} = \mathtt{null}$. Proof is similar as before.
    \item Case where \texttt{if} in loop body evaluates to \texttt{true}. Perform update simplification, propositional with split, and close provable goals macros. One goal remains: introduce $\mathit{index} - 1$ as witness to the existential quantifier. Perform propositional with split, close provable goals macros. Introduce universal quantifier, and the rest follows from loop invariant.
\end{itemize}

\subsection{\texttt{clear()}}

We take the method \texttt{clear()} as an example where the acyclicity property is needed for verification. Figure~\ref{lst:clear} shows that for each node, its fields \texttt{prev}, \texttt{item}, and \texttt{next} are cleared. The maintaining clauses are loop invariants, where ghost field \texttt{index} is used to point to the position in the loop. With every iteration the index is incremented with 1. The second last invariant specifies that for an iteration, the node whose fields are cleared, is the one next to the one that was on turn during the previous iteration. In the proof tree that the user builds up in KeY, at some point this results in a corresponding proof obligation that is partially shown here:

\inputkey{clear_null_1.txt}

\noindent Above is a proof obligation for \texttt{clear()} for the loop invariant where fields are cleared. $index\_1\_0$ is a logical variable derived by KeY, whose meaning is reflected by program variable \texttt{index}. Logical variable $i\_6$ comes is in when the loop invariant is `unpacked' (by applying rule \texttt{allRight}).

\begin{figure}
    \inputjml{clear.java}
    \caption{Method \texttt{clear()} annotated with JML. Removes all of the elements from this list. The list will be empty after this call returns.}
    \label{lst:clear}
\end{figure}

When $i\_6 < index\_1\_0$, how can we be sure that this means that the nodes are different? To be sure, we have to ensure that the list is acyclic. This is expressed by the formula shown below.

\inputkey{acyclic_clear.txt}

\noindent We associate a cut rule with this formula. The goal where the formula is a succedent needs to be closed in order to proof acyclicity. Applying twice \texttt{allRight} and then \texttt{impRight} leaves us with the proof obligation partially shown below.

\inputkey{acyclic_clear_unrolled.txt}

\noindent By taking the opposite of above acyclicity property, viz., by replacing each \texttt{\textbackslash forall} by \texttt{\textbackslash  exists}, and by removing the negation symbol \texttt{\text{!}}, we create a contradiction. We take the formula that is shown below for a next cut rule to be instantiated.

\inputkey{contradict2_acyclic.txt}

\noindent The goal where this formula is on the right side of the sequent, can be closed by using induction. The base case ($k=j\_0$) is evident, and the step case ($k+1$) is based on the 4\textsuperscript{th} maintaining clause of the loop invariant (Figure~\ref{lst:clear}). The goal where this formula is on the left side of the sequent can also be closed. This is because there is a contradiction between \texttt{last} having a \texttt{next} value of \texttt{null}, and at the same time the list being cyclic, i.e., \texttt{last} is the same node as another node not having a \texttt{next} value of \texttt{null}. What makes it closeable in the end is having the formula below on both sides of the sequent. This formula can be deduced by having \texttt{k = self.nodeList.length - 1}.

\inputkey{closing.txt}

\noindent This means acyclicity has been proved and thus can be used as an assumption, in the proof obligation where we needed it in the first place.

\section{Additional listings}\label{sec:add-listings}

\subsection{java.util.List}\label{sec:java-util-list}
Based on OpenJDK8, version jdk8-b132. This version is of 2014-03-04 as found on \url{https://hg.openjdk.java.net/jdk8/jdk8/jdk/}.

Only the essential parts of Javadoc are shown.

\inputminted[breaklines,fontsize=\scriptsize,tabsize=2]{jml.py:JMLLexer -x}{\snippet{List.java}}

\end{document}